\documentclass[11pt]{article}

\usepackage[utf8]{inputenc}
\usepackage[T1]{fontenc}
\usepackage{amsmath,amssymb,amsthm}
\usepackage{booktabs}
\usepackage{float}
\usepackage{graphicx}
\usepackage{geometry}
\usepackage[hidelinks]{hyperref}
\usepackage[numbers,sort&compress]{natbib}
\usepackage{placeins}
\usepackage{xcolor}

\graphicspath{{../results/figures/}{./}}
\newcommand{\PublicRecords}{30}
\newcommand{\PublicTerms}{57}
\newcommand{\PublicHoldout}{58}
\newcommand{\PublicRecallBaseline}{0.017}
\newcommand{\PublicRecallRepair}{0.069}
\newcommand{\PublicAPBaseline}{0.062}
\newcommand{\PublicAPRepair}{0.060}
\newcommand{\SyntheticRecallBaseline}{0.055}
\newcommand{\SyntheticRecallRepair}{0.099}
\newcommand{\ObservedBackend}{numpy}
\newcommand{\FastestBackend}{numpy}
\newcommand{\TensorFlowAvailable}{False}

\newcommand{\ClosureImageCount}{--}
\newcommand{\ClosureZipCount}{--}
\newcommand{\StressRouteSuccess}{0/0}
\newcommand{\StressTimestamped}{0}
\newcommand{\StressPositiveLabels}{18}
\newcommand{\TemporalSpineRows}{70}
\newcommand{\TemporalSpineCVEs}{8}
\newcommand{\TemporalSpineForbidden}{10}
\newcommand{\TemporalSpineMissing}{7}
\newcommand{\QueryBenchTasks}{6}
\newcommand{\QueryBenchMeanAP}{0.587}
\newcommand{\LeakageFamilyKEVInflation}{0.062}
\newcommand{\LeakageFamilyEPSSInflation}{0.033}
\newcommand{\LeakageFamilyPatchInflation}{0.025}
\newcommand{\SafeRepairRejectedOrQuarantined}{1}
\newcommand{\LeakageDetectionRate}{1.000}
\newcommand{\LeakageAdmissibleAP}{0.434}
\newcommand{\LeakageContaminatedAP}{1.000}
\newcommand{\LeakageAPInflation}{0.566}

\newcommand{\PoisonRepairAP}{0.508}

\newcommand{\PoisonRepairTopK}{6}
\newcommand{\PoisonHarmRate}{0.429}
\newcommand{\SensitivityBestAPGrowth}{0.0}
\newcommand{\SensitivityWorstAPGrowth}{0.7}
\newcommand{\AblationWorstGroup}{crypto}
\newcommand{\AblationWorstDeltaAP}{-0.036}
\newcommand{\BioLeakageDetectionRate}{1.000}
\newcommand{\BioLeakageAPInflation}{0.407}
\newcommand{\BioPoisonHarmRate}{0.333}
\newcommand{\BioPoisonAPDegradation}{0.442}
\newcommand{\BioSensitivityBestAPGrowth}{0.0}
\newcommand{\BioSensitivityWorstAPGrowth}{0.7}
\newcommand{\CrossDomainCongruentSpread}{True}
{--}
  \newcommand{\PublicTerms}{--}
  \newcommand{\PublicHoldout}{--}
  \newcommand{\PublicRecallBaseline}{--}
  \newcommand{\PublicRecallRepair}{--}
  \newcommand{\PublicAPBaseline}{--}
  \newcommand{\PublicAPRepair}{--}
  \newcommand{\SyntheticRecallBaseline}{--}
  \newcommand{\SyntheticRecallRepair}{--}
  \newcommand{\ObservedBackend}{--}
  \newcommand{\FastestBackend}{--}
  \newcommand{\TensorFlowAvailable}{--}
  
  \newcommand{\ClosureImageCount}{--}
  \newcommand{\ClosureZipCount}{--}
  \newcommand{\TemporalSpineRows}{--}
  \newcommand{\TemporalSpineCVEs}{--}
  \newcommand{\TemporalSpineForbidden}{--}
  \newcommand{\TemporalSpineMissing}{--}
  \newcommand{\QueryBenchTasks}{--}
  \newcommand{\QueryBenchMeanAP}{--}
  \newcommand{\LeakageFamilyKEVInflation}{--}
  \newcommand{\LeakageFamilyEPSSInflation}{--}
  \newcommand{\LeakageFamilyPatchInflation}{--}
  \newcommand{\SafeRepairRejectedOrQuarantined}{--}
  \newcommand{\BioLeakageDetectionRate}{--}
  \newcommand{\BioLeakageAPInflation}{--}
  \newcommand{\BioPoisonHarmRate}{--}
  \newcommand{\BioPoisonAPDegradation}{--}
  \newcommand{\BioSensitivityBestAPGrowth}{--}
  \newcommand{\BioSensitivityWorstAPGrowth}{--}
  \newcommand{\CrossDomainCongruentSpread}{--}
}

\newtheorem{definition}{Definition}
\newtheorem{proposition}{Proposition}
\newtheorem{theorem}{Theorem}

\title{Conductance-Repair Evidence Graphs\\
for Prospective Security Retrieval}
\author{
Faruk Alpay$^{1}$\thanks{Corresponding author: \texttt{alpay@lightcap.ai}.} \quad Taylan Alpay$^{2}$\\[3pt]
\small $^{1}$Department of Computer Engineering, Bah\c{c}e\c{s}ehir University, Istanbul, T\"urkiye\\
\small $^{2}$Department of Aerospace, University of Turkish Aeronautical Association, Ankara, T\"urkiye\\[-1pt]
\small \texttt{faruk.alpay@bahcesehir.edu.tr} \quad \texttt{s220112602@stu.thk.edu.tr}
}
\date{}

\begin{document}
\maketitle

\begin{abstract}
Security retrieval is usually evaluated as ranking over complete evidence, but
operational triage is prospective: CVE descriptions, weakness metadata, fix
commits, EPSS scores, KEV membership, validation-vector metadata, and
side-channel benchmark routes arrive through separate channels, and many are
missing, delayed, poisoned, or visible only after the decision time.  We
introduce \emph{conductance-repair evidence graphs}, a timestamped framework in
which retrieval is performed over a temporal admissibility mask and missing
channels are widened by a deterministic graph-flow recurrence rather than by a
learned predictor.  The method emits a repair certificate recording source
probes, decision time, withheld edges, repaired channels, forbidden
post-decision edges, backend availability, numerical deviation, and verifier
results.  The theoretical layer gives an adaptive \(\lceil\log_2 N\rceil\)
lower bound for missing-channel identification, an NP-hardness result for
minimum harmful repair, and a fixed-parameter certified search bound for
\(q\) questionable channels.  The current artifact materializes
\PublicRecords{} deduplicated public security records, \PublicTerms{} terms,
and \PublicHoldout{} withheld admissible document--term edges.  Under random
edge withholding, conductance repair changes recall@\(k\) from
\PublicRecallBaseline{} to \PublicRecallRepair{} and average precision from
\PublicAPBaseline{} to \PublicAPRepair{}, while a synthetic security fixture
improves recall@\(k\) from \SyntheticRecallBaseline{} to
\SyntheticRecallRepair{}; the public AP drop exposes a limit of broad
admissible repair under random edge corruption rather than a reason to abandon
channel-level repair.  The implementation benchmarks the same flow/SVD/einsum
kernel under NumPy, PyTorch, JAX, and TensorFlow when available, recording
unavailable backends rather than silently substituting them.  BBBC019 and
LIVECell metadata are retained only as structural controls for sparse evolving
source channels, with no clinical or biological performance claim.
\end{abstract}

\section{Introduction}

Public security evidence is not a single document stream.  A vulnerability may
first appear in a CVE record, later receive a severity vector, then appear in an
exploited-vulnerability catalog, then acquire exploit code, vendor statements,
fix commits, proof-of-concept discussions, validation vectors, or side-channel
traces.  A retrieval system that sees all of this evidence at once can rank well
while being useless for a decision that had to be made earlier.

This paper studies the narrower problem of \emph{prospective security
retrieval}.  At a decision time \(\tau(v)\), a system receives only evidence
whose timestamp is no later than \(\tau(v)\).  Some channels are missing for
ordinary reasons, such as sparse vendor metadata.  Others are missing for
adversarial reasons, such as delayed disclosure, poisoned keywords, or
post-label chatter that must be excluded.  We ask which missing channels can be
repaired without leaking future evidence and without pretending that semantic
similarity alone is evidence of exploitation.

The proposed object is a conductance-repair evidence graph.  CVEs, documents,
weakness classes, validation sources, and benchmark sources are nodes.  Edges
carry timestamps, source layers, and admissibility flags.  Repair is not a
classifier.  It is a graph-flow operation that increases conductance along
already admissible neighborhoods when a bounded budget says a channel may be
recoverable.  The same code path also emits a certificate: which edges were
withheld in the benchmark, which public routes were probed, which backend ran,
and whether TensorFlow, PyTorch, JAX, or NumPy actually executed the tensor
kernel.

\paragraph{Contributions.}
\begin{itemize}
  \item We define a prospective evidence graph for security retrieval with
  timestamped admissibility and channel-level repair certificates.
  \item We prove a lower bound for identifying missing channels, an NP-hardness
  statement for minimum harmful repair, and a fixed-parameter bound for a small
  questionable-channel set.
  \item We implement a deterministic conductance-flow kernel and evaluate it on
  corrupted document--term edges from public security records.
  \item We benchmark NumPy, PyTorch, JAX, and TensorFlow on the same flow and
  SVD/einsum probes.  TensorFlow is not given special status unless the measured
  table supports it.
  \item We include non-security closure-source metadata only as a structural
  control for sparse, evolving channels; no clinical claim is made.
\end{itemize}

\paragraph{Positioning.}
The primary scientific object is defensive security retrieval: how a triage
system should account for CVE, weakness, exploit, validation, and patch evidence
that becomes visible at different times.  The information-retrieval role is
secondary but essential: the paper evaluates ranked recovery under BM25,
degree, diffusion, and graph-repair baselines, and it reports precision-sensitive
failure modes rather than only recall gains.  This positioning keeps the paper
inside software-security measurement while making its retrieval assumptions
auditable.

\section{Related Sources}

\paragraph{Security evidence and prospective labels.}
The National Vulnerability Database is the earliest public vulnerability
substrate used in the artifact: its API and JSON feeds define CVE identifiers,
descriptions, publication times, and changed-record surfaces for the first
admissible security edges \citep{nvd_api}.  CISA's KEV catalog plays a different
role.  It is a delayed exploited-in-the-wild catalog, so its catalog date is
treated as label-like evidence that must be forbidden before the decision time
\citep{cisa_kev}.  FIRST EPSS supplies daily exploitation probabilities and
percentiles through timestamped CSV/API routes; the EPSS study motivates why
those scores are operational risk priors rather than vulnerability
descriptions \citep{first_epss,jacobs2019epss}.  CVEfixes contributes the
patch-evidence channel by linking CVEs to fixing commits, files, and repository
metadata; this channel is useful for root-cause retrieval but dangerous when a
fix is admitted before public visibility \citep{bhandari2021cvefixes}.

\paragraph{Cryptographic and software-assurance benchmarks.}
NIST SARD supplies weakness-oriented software artifacts, so it contributes
source-layer evidence about CWE-style program behavior rather than exploit
labels \citep{nist_sard}.  NIST SP 800-22 is used only as an analogy for
bounded statistical screening: it provides random-generator tests while
explicitly warning that such tests do not replace cryptanalysis
\citep{nist_sp80022}.  NIST CAVP and ACVTS provide validation-test context for
approved cryptographic algorithms and vector-based testing, which is treated
here as validation metadata rather than proof of implementation security
\citep{nist_cavp}.  ASCAD supplies a public AES side-channel benchmark route and
is therefore recorded as route metadata for side-channel evidence, not as CVE
ground truth \citep{ascad}.

\paragraph{Sparse evolving controls.}
The Broad Bioimage Benchmark Collection provides a public collection of
microscopy benchmark datasets, which makes it suitable as a non-security source
with sparse and evolving evidence routes \citep{bbbc}.  BBBC019 contributes a
specific collective-migration route with archive metadata, so it tests whether
the certificate can represent source availability without importing a security
label \citep{bbbc019}.  LIVECell contributes a large expert-validated
phase-contrast segmentation corpus and is used only to check whether the same
temporal accounting survives a high-volume non-security control
\citep{edlund2021livecell}.  No biological or clinical performance claim is
derived from these controls.

\paragraph{Retrieval, leakage, and graph flow.}
BM25 remains a standard sparse retrieval baseline and anchors the text-ranking
comparison before graph repair is introduced \citep{robertson2009probabilistic}.
PageRank motivates score propagation over linked evidence, while diffusion maps
motivate smooth propagation over a graph geometry; conductance repair borrows
from this family but adds a temporal support mask and a certificate
\citep{page1999pagerank,coifman2006diffusion}.  Temporal leakage is a known
failure mode in security evaluation: TESSERACT shows that malware experiments
that ignore time can overstate deployable performance
\citep{pendlebury2019tesseract}.  The broader security-ML warning is that
closed-world evaluation can be misleading when adversarial or operational
conditions shift \citep{sommer2010outside}.  TensorFlow, PyTorch, JAX, and NumPy
are treated as tensor backends for the same recurrence, not as independent model
quality claims \citep{tensorflow2016,paszke2019pytorch,bradbury2018jax,harris2020array}.

\section{Model}

Let \(G_\tau=(V,E_\tau)\) be the evidence graph visible at decision time
\(\tau\).  Nodes include security objects \(s\in S\), evidence documents
\(d\in D\), source layers \(\ell\in L\), and terms \(w\in W\).  An edge
\((u,v,t,\ell)\) is admissible at \(\tau\) only if \(t\le \tau\).  A channel is a
source-layer subgraph \(E_{\ell,\tau}\).  A repair algorithm receives
\(G_\tau\), a seed vector \(x_0\), and a budget \(B\), then returns a repaired
weighted graph \(\widehat G_\tau\) and a certificate \(C\).

\paragraph{Decision-time horizons.}
For each security object \(v\), the benchmark constructs a family of admissible
views.  \(H_0\) contains weak pre-CVE public evidence when such evidence is
available.  \(H_1\) is the CVE publication time.  \(H_2\) adds same-day NVD
metadata.  \(H_3\) adds only EPSS snapshots with date at or before
\(\tau(v)\).  \(H_4\) adds KEV only when the KEV catalog date is no later than
\(\tau(v)\).  \(H_5\) adds fix-commit evidence only when the commit or public
availability timestamp is no later than \(\tau(v)\).  Any edge outside the
selected horizon is marked forbidden, not merely absent, so the certificate can
state which tempting future edges were excluded.

\paragraph{Evidence-channel admissibility.}
Each channel has a graph role and a separate misuse risk.  NVD contributes CVE
descriptions, CVSS, CWE, reference, and CPE evidence, but last-modified text
after \(\tau\) can inject future enrichment.  KEV is a delayed
exploited-in-the-wild source layer, so membership after \(\tau\) is a label
proxy.  EPSS contributes daily score and percentile snapshots, but the latest
score is invalid for historical decisions.  CVEfixes links fix commits,
repositories, files, and patch terms; those patches can reveal the root cause
before it was operationally visible.  SARD supplies weakness-program and CWE
support evidence without being an exploit-label source.  CAVP and ACVTS provide
cryptographic validation context, but validation metadata does not prove
exploitability or implementation security.  ASCAD is retained only as
side-channel benchmark-route metadata, not CVE ground truth.  BBBC019 and
LIVECell are sparse evolving source controls with no clinical, biological, or
security-performance claim.

\begin{definition}[Conductance-repair step]
Let \(A_t\) be a nonnegative symmetric adjacency matrix over admissible nodes,
\(z_t\) a nonnegative source state, \(s\) a fixed seed, and \(M\) the binary
support mask of admissible edges with zero diagonal.  One repair step is
\[
  P_t(i,j)=\frac{A_t(i,j)}{\sum_j A_t(i,j)+\epsilon},
\]
\[
  z_{t+1} =
  \frac{(1-\delta)z_t+\alpha P_t^\top z_t+s}
       {\mathbf{1}^\top((1-\delta)z_t+\alpha P_t^\top z_t+s)},
\]
\[
  A_{t+1}=\min\{c,\;A_t+\eta M\odot z_{t+1}z_{t+1}^{\top}\}.
\]
The parameters \(\alpha,\delta,\eta,c\) are fixed before evaluation.
\end{definition}

This step can widen an admissible channel, but it cannot introduce a new
post-decision edge because the mask \(M\) is fixed by \(G_\tau\).  In the
implementation, the same recurrence is executed by NumPy, PyTorch, JAX, or
TensorFlow.  Gradients and model training are disabled.
The arguments below use finite-dimensional fixed-point, projection, and
control-invariant reasoning in the standard sense of convex optimization and
nonlinear systems analysis \citep{boyd2004convex,khalil2002nonlinear}.

\begin{proposition}[Fixed-point and convergence certificate]
Let
\[
  \mathcal{K}_M=\{A:0\le A\le c,\ A=A^\top,\ \mathrm{diag}(A)=0,\ 
  A\odot(1-M)=0\}
\]
and let \(\Delta=\{z:z\ge 0,\ \mathbf{1}^\top z=1\}\).  For fixed
\(M,\alpha,\delta,\eta,c,\epsilon\) and nonzero seed \(s\), the one-step repair
map \(T:\mathcal{K}_M\times\Delta\rightarrow\mathcal{K}_M\times\Delta\) has at
least one fixed point.  Moreover, if a measured subsequence
\((A_{t_j},z_{t_j})\) converges to \((A_\star,z_\star)\) and its one-step
residual \(\|T(A_{t_j},z_{t_j})-(A_{t_j},z_{t_j})\|_F\) tends to zero, then
\((A_\star,z_\star)\) satisfies the capped fixed-point equations.
\end{proposition}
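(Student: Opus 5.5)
We prove the two claims separately: existence via Brouwer's fixed-point theorem, and the limit statement via continuity of \(T\).

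\textbf{Well-definedness of \(T\).} For \((A,z)\in\mathcal{K}_M\times\Delta\) we check three things.

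First, \(P(i,j)=A(i,j)/(\sum_j A(i,j)+\epsilon)\) is well defined and continuous in \(A\), because the denominator is at least \(\epsilon>0\).

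Second, the unnormalized vector
\[
u=(1-\delta)z+\alpha P^\top z+s
\]
is nonnegative, assuming \(0\le\delta\le 1\) and \(\alpha\ge 0\). Since \(s\ge 0\) and \(s\neq 0\), we get \(\mathbf 1^\top u\ge \mathbf 1^\top s>0\). Hence \(z'=u/\mathbf 1^\top u\) lies in \(\Delta\) and depends continuously on \((A,z)\). The nonzero seed is exactly what keeps this normalization away from division by zero.

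Third, the update \(A'=\min\{c,\,A+\eta M\odot z'z'^\top\}\), taken entrywise, preserves every constraint of \(\mathcal{K}_M\):
\begin{itemize}
\item symmetry holds because \(M\) and \(z'z'^\top\) are symmetric;
\item the zero diagonal holds because \(\mathrm{diag}(M)=0\);
\item the support condition \(A'\odot(1-M)=0\) holds because off-support entries of \(A\) are zero and the increment is masked by \(M\);
\item the bounds \(0\le A'\le c\) hold because \(\eta\ge0\) makes the increment nonnegative and the cap enforces \(A'\le c\).
\end{itemize}
The entrywise minimum is continuous, so \(T\) is a continuous self-map of \(\mathcal{K}_M\times\Delta\).

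\textbf{Existence.} The set \(\mathcal{K}_M\) is an intersection of finitely many closed half-spaces and linear subspaces intersected with a box, so it is nonempty (it contains \(0\)), convex, and compact. The simplex \(\Delta\) is also nonempty, convex, and compact. Their product is therefore a nonempty compact convex subset of a finite-dimensional Euclidean space. Brouwer's theorem then gives a fixed point of \(T\).

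\textbf{Limit statement.} Let \(x_j=(A_{t_j},z_{t_j})\to x_\star=(A_\star,z_\star)\). Closedness of the domain gives \(x_\star\in\mathcal{K}_M\times\Delta\), and continuity of \(T\) gives \(T(x_j)\to T(x_\star)\). Therefore
\[
\|T(x_j)-x_j\|_F\to\|T(x_\star)-x_\star\|_F .
\]
By hypothesis the left side tends to \(0\), so \(T(x_\star)=x_\star\). Written out componentwise, this is the capped fixed-point system:
\[
z_\star=\frac{u_\star}{\mathbf 1^\top u_\star},\qquad A_\star=\min\{c,\,A_\star+\eta M\odot z_\star z_\star^\top\},
\]
where \(u_\star\) is built from \(P_\star\), the normalization of \(A_\star\).

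\textbf{Main obstacle.} The substantive step is continuity of \(T\) on the whole domain. The two danger points are rows of \(A\) with zero degree and a zero normalizer for \(z\). The \(\epsilon\)-regularized degree handles the first, and the nonzero nonnegative seed handles the second. If the sign conditions \(\alpha,\eta\ge0\) and \(\delta\in[0,1]\) are not explicit in the statement, we take them as the standing assumptions of the definition.

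The second clause does \emph{not} claim that the iterates converge; it only identifies accumulation points whose residual vanishes as fixed points. Compactness of the domain guarantees that such convergent subsequences always exist.
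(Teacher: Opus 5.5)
Your proposal is correct and follows the paper's proof essentially step for step: continuity of \(T\) via the \(\epsilon\)-regularized degrees and the nonzero seed, invariance of \(\mathcal{K}_M\times\Delta\) under the masked capped update, Brouwer on the compact convex product, and continuity to carry the vanishing residual to the limit. The only difference is that you state explicitly the assumptions the paper uses silently (\(s\ge 0\), \(\alpha,\eta\ge 0\), \(\delta\in[0,1]\), and closedness of the domain so that \(T(A_\star,z_\star)\) is defined); both arguments also tacitly need \(M\) to be symmetric for the update to preserve \(A=A^\top\).
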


\begin{proof}
The set \(\mathcal{K}_M\times\Delta\) is compact and convex in a finite
dimensional Euclidean space.  The row normalization defining \(P\) is continuous
because each denominator is bounded below by \(\epsilon\).  The state update is
continuous and has nonnegative entries; its normalization denominator is
positive because \(s\neq 0\).  The adjacency update preserves symmetry, the
zero diagonal, the mask support, nonnegativity, and the cap.  Hence \(T\) is a
continuous self-map of a compact convex set, so Brouwer's theorem gives a fixed
point.  For the subsequence claim, continuity gives
\[
  T(A_\star,z_\star)-(A_\star,z_\star)
  =\lim_{j\to\infty}\{T(A_{t_j},z_{t_j})-(A_{t_j},z_{t_j})\}=0.
\]
This is exactly the capped fixed-point system reported by the certificate.
\end{proof}

\begin{proposition}[Repair-intensity sensitivity]
For a fixed current state \((A_t,z_t)\), let \(y=z_{t+1}\) be the normalized
state computed before the adjacency update.  Changing the repair intensity from
\(\eta\) to \(\eta+\Delta\eta\) changes the next adjacency by at most
\[
  \|A^+_{\eta+\Delta\eta}-A^+_\eta\|_F
  \le |\Delta\eta|\,\|M\odot yy^\top\|_F
  \le |\Delta\eta|.
\]
Thus the measured growth sweep is a local sensitivity audit for saturation and
poisoned admissible edges.
\end{proposition}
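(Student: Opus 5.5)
The argument splits into two pieces: a pointwise Lipschitz estimate for the capped update, and an elementary bound on the Frobenius norm of the masked outer product of a probability vector.

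First, since \(y=z_{t+1}\) is computed from \((A_t,z_t)\) before the adjacency update, it does not depend on \(\eta\). Both candidate adjacencies therefore share the same \(A_t\), \(M\) and \(y\):
\[
A^+_\eta=\min\{c,\;A_t+\eta\,M\odot yy^\top\},\qquad
A^+_{\eta+\Delta\eta}=\min\{c,\;A_t+(\eta+\Delta\eta)\,M\odot yy^\top\}.
\]
The cap \(x\mapsto\min\{c,x\}\) acts entrywise and is \(1\)-Lipschitz on \(\mathbb{R}\). Applying this entry by entry gives
\[
|A^+_{\eta+\Delta\eta}(i,j)-A^+_\eta(i,j)|\le |\Delta\eta|\,M(i,j)\,y_iy_j .
\]
Squaring and summing over \((i,j)\) yields the first inequality,
\[
\|A^+_{\eta+\Delta\eta}-A^+_\eta\|_F\le|\Delta\eta|\,\|M\odot yy^\top\|_F .
\]
The same argument applies if the cap is instead read as a projection onto \([0,c]\), because that projection is also nonexpansive.

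For the second inequality, the mask satisfies \(0\le M\le 1\) entrywise. Hence \(\|M\odot yy^\top\|_F\le\|yy^\top\|_F=\|y\|_2^2\). By the normalization in the repair step, \(y\in\Delta\) (this is the invariance already used in the Fixed-point and convergence certificate). So \(y\ge0\) and \(\sum_i y_i=1\), which gives \(\|y\|_2^2\le\|y\|_1^2=1\). Combining the two estimates gives the bound \(|\Delta\eta|\).

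The only delicate point is confirming that \(y\) really lies in the simplex. This needs the normalization denominator \(\mathbf{1}^\top((1-\delta)z_t+\alpha P_t^\top z_t+s)\) to be positive. That holds under the standing assumptions \(z_t\in\Delta\), \(s\ge0\), \(s\neq0\), and nonnegative coefficients \(1-\delta\ge0\), \(\alpha\ge0\), exactly as in the earlier proposition.

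A slightly sharper constant is also available. Since \(M\) has zero diagonal, \(\|M\odot yy^\top\|_F^2\le\sum_{i\ne j}y_i^2y_j^2\le(\sum_i y_i^2)^2\). This is not needed for the stated bound.

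The final sentence of the proposition, that the growth sweep acts as a local sensitivity audit, is an interpretation rather than a mathematical claim. It follows from the bound: sweeping \(\eta\) moves the one-step adjacency by at most \(|\Delta\eta|\) in Frobenius norm, with no dependence on the size of the graph.
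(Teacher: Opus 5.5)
Your proposal is correct and follows the paper's proof essentially step for step. The shared steps are: \(y\) does not depend on \(\eta\); the uncapped difference is exactly \(\Delta\eta\,M\odot yy^\top\); the entrywise cap is nonexpansive, where your entrywise \(1\)-Lipschitz argument for \(\min\{c,\cdot\}\) is the unpacked form of the paper's Frobenius nonexpansiveness claim; and \(0\le M\le 1\) with \(y\in\Delta\) gives \(\|M\odot yy^\top\|_F\le\|y\|_2^2\le\|y\|_1^2=1\). Your explicit check that the normalization denominator is positive, so that \(y\ge 0\) and \(\|y\|_1=1\), and your zero-diagonal refinement are harmless additions that the paper leaves implicit.
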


\begin{proof}
The vector \(y\) is determined by \(A_t,z_t,\alpha,\delta,\epsilon\) and the
seed, not by the subsequent scalar \(\eta\).  Without the cap, the difference is
exactly \(\Delta\eta\,M\odot yy^\top\).  The entrywise projection onto
\([0,c]\) is nonexpansive in Frobenius norm, so adding the cap cannot increase
the perturbation.  Finally, \(0\le M\le 1\) and \(y\in\Delta\), hence
\(\|M\odot yy^\top\|_F\le \|yy^\top\|_F=\|y\|_2^2\le \|y\|_1^2=1\).
Saturation can still erase useful rank information, so the certificate records
the saturated-edge fraction separately.
\end{proof}

\begin{proposition}[Safe repair decision rule]
Let \(D(C)\in\{\mathrm{allow},\mathrm{quarantine},\mathrm{reject}\}\) be the
certificate decision.  The implementation rejects if leakage detection is below
1.000, if the support mask is violated, or if numerical deviation from the
NumPy reference exceeds \(10^{-6}\).  It quarantines when harmful repair or
saturation exceeds the configured safety boundary; otherwise it allows repair.
\end{proposition}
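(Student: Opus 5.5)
The proposition is essentially a specification of a deterministic decision procedure, so the proof reduces to showing that the rule is well defined and has the stated safety properties. I will write the rule as a composition of Boolean predicates on the certificate \(C\). These are
\(L(C)\) (leakage detection rate equals \(1.000\)), \(S(C)\) (support mask respected, i.e.\ \(\widehat A\odot(1-M)=0\)), \(N(C)\) (backend deviation from the NumPy reference at most \(10^{-6}\)), and \(H(C)\) (harmful-repair rate and saturated-edge fraction within the configured boundary). The decision is then
\[
D(C)=\begin{cases}\mathrm{reject}, & \neg(L\wedge S\wedge N),\\ \mathrm{quarantine}, & (L\wedge S\wedge N)\wedge\neg H,\\ \mathrm{allow}, & L\wedge S\wedge N\wedge H.\end{cases}
\]
The first step is to check that these three cases partition the space of certificates. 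That makes \(D\) a total, single-valued function in which reject takes priority over quarantine. Each predicate only reads finite recorded fields of \(C\), so the rule is computable in time linear in the certificate size.

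The second step, which carries the real content, is soundness of allow. Allow implies \(S(C)\), so every repaired edge lies in the support of \(M\). By the definition of the conductance-repair step and the invariance of \(\mathcal{K}_M\) established in the proof of the fixed-point proposition, a correctly executed recurrence automatically satisfies \(S\). The \(S\) check therefore only guards against implementation or backend faults. Allow also implies \(N(C)\). Hence any non-NumPy output is within \(10^{-6}\) of a reference that does lie in \(\mathcal{K}_M\), and the allowed repair inherits the no-post-decision-edge guarantee up to that numerical tolerance. Finally, allow implies \(H(C)\), and the repair-intensity sensitivity proposition bounds how much the adjacency can move between neighbouring \(\eta\) values. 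The saturation audit recorded in \(C\) is thus a meaningful local quantity rather than an artefact of one grid point.

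The main obstacle is that the leakage test \(L\) and the harmful-repair boundary \(H\) are empirical, measured quantities. The statement cannot yield a theorem-level guarantee that an allowed repair is harmless in general. I would therefore state the guarantee conditionally: the rule never allows a certificate whose recorded measurements violate the masks or tolerances. I would not claim that allowed repairs are harmless beyond what the recorded measurements show. The monotonicity claim is also conditional. Tightening any threshold can only move a decision toward quarantine or reject, because each predicate is monotone in its threshold. I would prove this last point case by case, and it completes the argument.
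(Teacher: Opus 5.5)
Your proof is correct and is essentially the paper's argument: the paper checks the reject predicates (leakage, mask violation, numerical deviation) first, applies the quarantine predicates only once those pass, and allows everything else, so the three cases form an exclusive and exhaustive partition and every run receives exactly one action; your explicit case formula, with $\neg(L\wedge S\wedge N)$ taking priority, is the same partition written out formally. The soundness-of-allow and threshold-monotonicity remarks are extras the statement does not need, and the saturation remark should be weakened or dropped, because the sensitivity proposition is only a one-step Frobenius bound from a fixed state and does not control the count-based, discontinuous saturated-edge fraction across neighbouring values of $\eta$.
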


\begin{proof}
The decision rule is a deterministic partition of the certificate state.  The
reject predicates are disjoint hard failures of temporal admissibility, mask
invariance, or numerical reproducibility, and any one of them makes the run
invalid as a prospective measurement.  The quarantine predicates are softer:
they arise after the support rule has been respected but the observed transition
enters a harmful-repair or saturation region.  All remaining certificate states
fall into the allow case.  Therefore every run receives exactly one action, and
no harmful or contaminated run is silently counted as a benign retrieval gain.
\end{proof}

\paragraph{Control view.}
The state is \((A_t,z_t)\); the control is the allocation of repair budget
across source-layer channels.  The invariant set is the nonnegative, capped,
temporally admissible weighted graph with normalized evidence mass.  Unsafe
states are leakage, poisoned expansion, saturation collapse, and excessive mass
concentration around semantically irrelevant neighborhoods.  The certificate is
therefore a transition audit: it checks that every update stayed inside the
admissible support and records which channel was responsible when repair was
harmful.

\begin{definition}[Minimum harmful repair]
Given \(G_\tau\), a target set \(T\), a set of candidate repair channels
\(\mathcal{R}\), and a loss threshold \(\lambda\), the minimum harmful repair
problem asks for the smallest \(\mathcal{Q}\subseteq\mathcal{R}\) whose repair
causes the target score to cross \(\lambda\).
\end{definition}

\begin{theorem}[NP-hardness]
Minimum harmful repair is NP-hard.
\end{theorem}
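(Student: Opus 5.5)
We prove NP-hardness by a polynomial-time reduction from Set Cover, which is NP-complete, to the decision version of minimum harmful repair: given \(G_\tau\), \(T\), \(\mathcal{R}\), \(\lambda\) and an integer \(k\), decide whether some \(\mathcal{Q}\subseteq\mathcal{R}\) with \(|\mathcal{Q}|\le k\) makes the target score cross \(\lambda\). The definition leaves the target score and the repair semantics abstract. We therefore fix the most restrictive reading that is still consistent with the model. Repairing a channel \(r\in\mathcal{R}\) restores (at full conductance \(c\)) the admissible edges of the source-layer subgraph \(E_{r,\tau}\). The target score is the number of target nodes in \(T\) that become reachable from the seed through restored edges. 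Hardness under this special case implies hardness of the general problem, since the general problem contains it as an instance family.

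The construction is as follows. Take a Set Cover instance with universe \(U=\{u_1,\dots,u_n\}\), sets \(S_1,\dots,S_m\), and bound \(k\).
\begin{itemize}
\item Build a seed node \(\sigma\), one channel node \(r_j\) per set \(S_j\), and one target node per element \(u_i\), with \(T=\{u_1,\dots,u_n\}\).
\item Channel \(r_j\) is the source layer whose edges are \(\{\sigma,r_j\}\) and \(\{r_j,u_i\}\) for every \(u_i\in S_j\). All edges get timestamp \(t\le\tau\), so they lie in the support mask \(M\) and no forbidden edge is involved.
\item Initially all channel edges are withheld, with weight \(0\) in \(A_0\), so every target has zero score.
\item Set \(\lambda=n\): the harmful threshold is crossed exactly when every target receives positive evidence mass.
\end{itemize}
A family \(\mathcal{Q}\) of repaired channels reaches all of \(T\) if and only if \(\{S_j:r_j\in\mathcal{Q}\}\) covers \(U\). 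Hence a harmful repair of size \(\le k\) exists if and only if a set cover of size \(\le k\) exists. The instance has \(O(m+n)\) nodes and \(O(\sum_j|S_j|)\) edges, so the reduction is polynomial.

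The main obstacle is tying "target score" to the actual conductance-repair dynamics of the Definition rather than to bare reachability. If the score is instead the stationary mass \(z_\star\) on \(T\) after running the recurrence, we must show that the score crosses \(\lambda\) exactly when the restored edges cover \(T\). First, use the mask argument already stated after the Definition: updates only strengthen edges inside \(M\odot\) (restored support). Therefore a target not connected to \(\sigma\) by restored edges keeps zero mass, because the seed \(s\) is supported on \(\sigma\) and \(P_t^\top\) propagates only along positive edges. Conversely, every connected target receives mass at least some explicit \(\beta>0\). We bound \(\beta\) below by using the fixed path length \(2\) and the cap \(c\), which makes every positive entry of \(P_t\) at least \(c/(nc+\epsilon)\). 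We would then set the threshold to "\(\min_{u\in T} z(u)\ge\beta\)", or to "\(\sum_{u\in T}\mathbf{1}[z(u)>0]\ge n\)". Either choice makes the equivalence with Set Cover exact, and the rest of the proof is routine.
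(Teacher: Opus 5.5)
Your reduction is correct and is essentially the paper's own: Set Cover elements become target atoms, each set \(S_j\) becomes a candidate repair channel touching exactly its elements, the harmful score counts covered targets, and the threshold equals the universe size, so a harmful repair of size at most \(k\) exists iff a set cover of size at most \(k\) does. The seed node \(\sigma\) and your closing paragraph tying the count to the actual recurrence go beyond the paper, which simply defines the loss as the number of atoms receiving repaired conductance; if you keep that paragraph, two fixes are needed: the zero-mass claim needs an induction on the support of \(z_t\) rather than the mask lemma alone, since unrestored channel edges are still in \(M\), and the lower bound on positive entries of \(P_t\) should use the maximum degree, e.g.\ \(c/((m+n+1)c+\epsilon)\), rather than \(n\).
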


\begin{proof}
Use the decision version and reduce Set Cover \citep{karp1972reducibility}.  Let
the Set Cover instance be \(U=\{u_1,\ldots,u_m\}\), sets
\(\mathcal{S}=\{S_1,\ldots,S_r\}\), and budget \(k\).  Create one target
evidence atom \(a_i\) for each \(u_i\).  Create one candidate repair channel
\(R_j\) for each set \(S_j\), and define channel \(R_j\) to add unit admissible
conductance to exactly the atoms \(\{a_i:u_i\in S_j\}\).  Define the harmful
loss as the number of target atoms that receive repaired conductance and set
\(\lambda=m\).  Then a channel family \(\mathcal{Q}\) of size at most \(k\)
crosses the loss threshold if and only if \(\{S_j:R_j\in\mathcal{Q}\}\) covers
all elements of \(U\).  The construction is polynomial, so a polynomial-time
algorithm for minimum harmful repair would solve Set Cover.  In the biological
instantiation, universe elements are cell-lineage or wound-region atoms, and
sets are candidate microscopy source channels such as segmentation, tracking,
confluence, contact-inhibition, or ECM layers.
\end{proof}

\begin{proposition}[Small questionable-channel search]
If only \(q\) channels are questionable and the verifier for a selected channel
set runs in polynomial time, exhaustive certified repair runs in
\(O(2^q\operatorname{poly}(|G|))\).
\end{proposition}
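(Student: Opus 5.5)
The plan is to show that certified repair needs only an enumeration of subsets of the questionable channels, with one polynomial verifier call per subset. First I would make the setting precise. Let \(\mathcal{R}\) be the candidate repair channels, and split it as \(\mathcal{R}=\mathcal{R}_0\cup\mathcal{R}_Q\). Here \(\mathcal{R}_Q\) holds the \(q\) questionable channels, and \(\mathcal{R}_0\) holds channels whose status (admit or exclude) is already fixed by the temporal admissibility mask and the horizon rules \(H_0,\dots,H_5\). Settling \(\mathcal{R}_0\) is polynomial work: each edge's timestamp is compared with \(\tau(v)\), so marking edges forbidden or admissible costs \(O(|E_\tau|)\). So every candidate certified repair has the form \(\mathcal{R}_0^{\mathrm{adm}}\cup\mathcal{Q}\) with \(\mathcal{Q}\subseteq\mathcal{R}_Q\), and there are exactly \(2^q\) such candidates.

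Next I would handle one candidate \(\mathcal{Q}\) at a time. Build the support mask \(M_{\mathcal{Q}}\) from \(G_\tau\), which takes polynomial time. Then run a fixed, pre-set number \(T_{\max}\) of the conductance-repair steps of the Definition. Each step consists of a row normalization, a matrix--vector product, and a rank-one masked update, so it costs \(O(|V|^2)\). By Proposition (Fixed-point and convergence certificate), the iterates stay in \(\mathcal{K}_M\times\Delta\), so no step can leave the admissible support. Finally call the verifier on the result: it checks leakage detection, mask invariance, numerical deviation, and the harmful-loss threshold \(\lambda\). The verifier is polynomial by hypothesis, and the Safe repair decision rule turns its output into exactly one of allow, quarantine, or reject. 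Per subset the total cost is \(\operatorname{poly}(|G|)\).

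Then I would enumerate all \(2^q\) subsets, for example by Gray code or plain bitmasks, keeping the best allowed subset under a fixed tie-break such as minimal size and then lexicographic order. This gives total time \(2^q\cdot\operatorname{poly}(|G|)\), plus an additive polynomial term for preprocessing, which is \(O(2^q\operatorname{poly}(|G|))\). Correctness is by exhaustiveness: every admissible choice over the questionable channels is verified, so the certificate can list the decision for every subset. If a minimum harmful repair restricted to \(\mathcal{R}_Q\) is wanted, it is found among these subsets. This fits with the NP-hardness theorem: the Set Cover hardness lives in the parameter \(q=|\mathcal{R}|\), and here that parameter sits in the exponent.

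The main obstacle is the claim that the per-subset work is genuinely polynomial and independent of \(q\). The repair map only has a fixed point by Brouwer, with no contraction rate, so "run repair to convergence" has no polynomial bound. I would avoid this by fixing the iteration count \(T_{\max}\) before evaluation, as the parameters \(\alpha,\delta,\eta,c\) are fixed, and by taking \(T_{\max}\) to be part of the certified procedure. The certificate then records the final residual rather than claiming exact convergence. A secondary point is that the verifier's polynomial must not hide a dependence on \(q\). Since \(q\le|\mathcal{R}|\le|G|\), any such dependence can be absorbed into \(\operatorname{poly}(|G|)\).
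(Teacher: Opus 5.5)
Your proposal is correct and takes the same route as the paper: enumerate the \(2^q\) subsets of questionable channels and do polynomial work (building the repaired graph plus one verifier call) for each. The one thing you add is an explicit iteration budget \(T_{\max}\) to bound the per-subset repair cost, which the paper's proof folds into ``all other work is polynomial by assumption''; this is a worthwhile clarification, because the Brouwer fixed-point result gives no convergence rate.
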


\begin{proof}
Enumerate all subsets of the \(q\) questionable channels and run the verifier on
each repaired graph.  The enumeration contributes \(2^q\); all other work is
polynomial by assumption.  For biological fixtures, the \(q\) questionable
channels are the flagged lineage, migration, contact-inhibition, and ECM source
layers named in the certificate.
\end{proof}

\begin{proposition}[Binary identification lower bound]
Any binary protocol that identifies one of \(N\) missing channels in the worst
case needs at least \(\lceil\log_2 N\rceil\) queries.
\end{proposition}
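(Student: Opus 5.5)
The plan is the standard information-theoretic, or decision-tree, counting argument. First I will fix what a ``binary protocol'' means. It is an adaptive sequence of queries, each with an answer in \(\{0,1\}\); the choice of the next query may depend on all previous answers. The protocol halts with an output naming one of the \(N\) candidate missing channels. It \emph{identifies} the missing channel if, for every possible true missing channel \(i\in\{1,\dots,N\}\), the output equals \(i\). I will assume queries are deterministic given the past transcript. A randomized protocol that must succeed in the worst case can be fixed to any one random string and still succeeds on all inputs, so this assumption costs nothing.

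The key step is to model the protocol as a binary decision tree. Internal nodes are queries, the two outgoing edges are the answers, and leaves carry outputs. Each true channel \(i\) determines a root-to-leaf path. Correctness forces distinct channels \(i\neq i'\) to reach distinct leaves, because a single leaf carries a single output. Hence the tree has at least \(N\) leaves. A binary tree of depth \(h\) has at most \(2^h\) leaves, which I will show by a one-line induction on \(h\). This gives \(2^h\ge N\), so \(h\ge\log_2 N\), and since \(h\) is an integer, \(h\ge\lceil\log_2 N\rceil\). The depth is exactly the worst-case number of queries, which is the stated bound. I will also remark that the bound is tight: binary search over channel indices attains \(\lceil\log_2 N\rceil\) queries. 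That is consistent with the adaptive lower bound quoted in the abstract.

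I expect the main obstacle to be modeling rather than calculation. The statement is informal, so I must make sure the definition of ``query'' excludes answers carrying more than one bit. It must also exclude queries whose answers depend on anything beyond the unknown channel index, such as noise. If answers could be noisy, the pigeonhole step fails and one would need Fano's inequality instead. I will therefore state explicitly that answers are deterministic functions of the true missing channel and the transcript so far. Under that convention, the leaf-counting argument above is complete.
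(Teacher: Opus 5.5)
Your proposal is correct and matches the paper's argument: the paper notes that \(q\) yes/no queries yield at most \(2^q\) transcripts and that identifying one of \(N\) channels requires \(2^q\ge N\), which is your leaf-counting bound on the decision tree. Your explicit treatment of adaptivity, of answers as deterministic functions of the true channel, and of the randomized-to-deterministic reduction only spells out what the paper's proof leaves implicit.
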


\begin{proof}
After \(q\) yes/no queries there are at most \(2^q\) transcripts.  Identifying
one of \(N\) channels requires \(2^q\ge N\).  The biological version takes the
\(N\) alternatives to be missing microscopy channels over lineages, wound
regions, masks, tracking vectors, confluence snapshots, and ECM evidence.
\end{proof}

\begin{proposition}[Mask invariance and boundedness]
Assume \(A_0\ge 0\), \(z_0\ge 0\), \(s\ge 0\), \(\epsilon>0\), and
\(c<\infty\).  If an initially absent edge is forbidden by \(M(i,j)=0\), the
repair update never introduces it.  Moreover \(0\le A_t(i,j)\le c\) for every
iteration, and \(z_t\) remains normalized whenever the denominator in the
normalization step is nonzero.
\end{proposition}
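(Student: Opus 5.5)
The plan is a joint induction on \(t\) for three invariants of the conductance-repair step. The invariants are: (i) \(A_t(i,j)=0\) whenever \(M(i,j)=0\) and \(A_0(i,j)=0\); (ii) \(0\le A_t\le c\) entrywise; (iii) \(z_t\ge 0\), with \(\mathbf{1}^\top z_t=1\) whenever the normalization denominator at that step is nonzero. The base case is the hypothesis \(A_0\ge 0\), \(z_0\ge 0\). For the bound \(A_0\le c\), the statement implicitly needs the initial adjacency to respect the cap. I would state this explicitly, or note that the first update already enforces the cap, so (ii) holds for \(t\ge 1\) regardless.

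For the inductive step, nonnegativity comes first. Since \(A_t\ge 0\) and \(\epsilon>0\), every denominator \(\sum_j A_t(i,j)+\epsilon\) is at least \(\epsilon>0\). Hence \(P_t\) is well defined and entrywise nonnegative. The unnormalized state \(u=(1-\delta)z_t+\alpha P_t^\top z_t+s\) is then nonnegative, provided the fixed parameters satisfy \(0\le\delta\le 1\) and \(\alpha\ge 0\). I would record this as the standing parameter convention, since the statement leaves it implicit. If \(\mathbf{1}^\top u\neq 0\), then \(z_{t+1}=u/\mathbf{1}^\top u\) is nonnegative with unit mass, which gives (iii). Consequently \(z_{t+1}z_{t+1}^\top\ge 0\). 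With \(\eta\ge 0\), the pre-cap matrix \(A_t+\eta M\odot z_{t+1}z_{t+1}^\top\) is therefore \(\ge A_t\ge 0\). Taking the minimum with \(c\ge 0\) keeps every entry in \([0,c]\), which gives (ii).

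Mask invariance then follows entrywise. If \(M(i,j)=0\), the increment \(\eta M(i,j)z_{t+1}(i)z_{t+1}(j)\) vanishes. So \(A_{t+1}(i,j)=\min\{c,A_t(i,j)\}=A_t(i,j)=0\) by the induction hypothesis and \(c\ge 0\), which gives (i).

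The only delicate point is the degenerate case \(\mathbf{1}^\top u=0\). There \(z_{t+1}\) is undefined, and the statement deliberately restricts the normalization claim to nonzero denominators. I would handle it by noting that \(u\ge 0\) and \(\mathbf{1}^\top u=0\) force \(u=0\), which is impossible when \(s\neq 0\). Otherwise I adopt the implementation convention \(z_{t+1}=0\). Under that convention the increment is zero, so (i) and (ii) still propagate, and only (iii) lapses at that step.
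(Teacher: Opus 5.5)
Your proposal is correct and follows essentially the same route as the paper: an induction on \(t\) in which the masked increment \(\eta M(i,j)z_{t+1}(i)z_{t+1}(j)\) vanishes on forbidden entries, nonnegativity propagates because every additive term is nonnegative, the entrywise minimum with \(c\) gives the cap, and dividing the nonnegative numerator by its positive mass puts \(z_{t+1}\) in the simplex. What you add are details the paper's ``holds at \(t=0\) by assumption'' leaves implicit: the sign conventions \(\eta,\alpha\ge 0\), \(\delta\le 1\), \(c\ge 0\); the unstated hypothesis \(A_0\le c\), or alternatively restricting the cap claim to \(t\ge 1\); and the treatment of a zero normalization denominator.
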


\begin{proof}
Proceed by induction on \(t\).  The claim holds at \(t=0\) by assumption.  If
\(M(i,j)=0\), then the additive term
\(\eta M(i,j)z_{t+1}(i)z_{t+1}(j)\) is zero, so a forbidden zero entry remains
zero at \(t+1\).  Nonnegativity is preserved because all additive terms are
nonnegative, and the entrywise minimum with \(c\) gives the upper bound.  The
state numerator is nonnegative; when its total mass is nonzero, division by that
mass places \(z_{t+1}\) in the simplex.  This closes the induction.
\end{proof}

\begin{proposition}[Ranking non-monotonicity]
Increasing admissible repair intensity can increase recovered mass while
decreasing average precision.
\end{proposition}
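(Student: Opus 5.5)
This is an existence claim ("can"), so I will prove it with an explicit small counterexample instance of the conductance-repair step rather than a general argument. First I fix the two quantities being compared. The \emph{recovered mass} is the total repaired conductance $\sum_{(i,j)\in H}A_1(i,j)$ placed on a set $H$ of withheld admissible edges. The ranking is the one induced by scoring candidate documents for a query term by their repaired conductance $A_1(\cdot,w)$. By Proposition (Repair-intensity sensitivity), before the cap binds, $A_1=A_0+\eta\,M\odot yy^\top$ with $y=z_1$ independent of $\eta$. So every entry on the mask grows linearly in $\eta$ at rate $y_iy_j$, and recovered mass is nondecreasing in $\eta$. By Proposition (Mask invariance and boundedness), entries remain in $[0,c]$ and are never created off the mask.

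The construction uses one query term $w$, one relevant document $d_r$, and one irrelevant but admissible document $d_n$. Both document--term edges lie in $M$ and are withheld in the benchmark, with $A_0(d_r,w)=a_r$ and $A_0(d_n,w)=a_n$. I choose $a_r>a_n$, so that at $\eta=0$ the relevant document ranks first and AP $=1$. I choose the seed $s$ and the existing admissible neighbourhood so that the propagated state satisfies $y_{d_n}>y_{d_r}$. For example, $d_n$ is a high-degree hub adjacent to the seed through extra admissible edges, which models the ``semantically irrelevant neighbourhood'' from the control view. Then
\[
A_1(d_n,w)-A_1(d_r,w)=(a_n-a_r)+\eta\,y_w\,(y_{d_n}-y_{d_r}),
\]
and this difference becomes positive once $\eta>\eta^\ast=(a_r-a_n)/\bigl(y_w(y_{d_n}-y_{d_r})\bigr)$. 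I pick $c$ large enough that neither entry saturates at the chosen $\eta$. For $\eta>\eta^\ast$ the order flips and AP falls to $1/2$, while recovered mass on $H=\{(d_r,w),(d_n,w)\}$, and even on the relevant edge alone, has strictly increased.

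The main obstacle is producing concrete numbers for $y$ from the normalized state update. I would take a four- or five-node graph with explicit $\alpha,\delta,\epsilon$ and compute $z_1$ directly, checking that the hub receives more mass than $d_r$ and that $y_w>0$. If closed-form verification is messy, I would set $\alpha$ large and place $s$ on the hub's neighbour, so that $P^\top z_0$ concentrates on $d_n$. As an alternative, I would note that saturation also suffices: if $c$ is small, both entries cap at $c$, which creates a tie and lowers AP, matching the saturation remark after Proposition (Repair-intensity sensitivity).
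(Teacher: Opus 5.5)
Your construction is correct and is essentially the paper's argument: a two-item instance in which the relevant document starts ahead ($a_r>a_n$, playing the role of the paper's $1$ versus $0.9$) and repair deposits more conductance on the irrelevant neighbourhood, so the order flips, AP falls from $1$ to $1/2$, and the relevant edge's mass still grows. Your version is tighter than the paper's, which simply posits increments $0.2$ and $0.5$: you derive the increments $\eta\,y_w y_d$ from the one-step recurrence via the sensitivity proposition, and you give the explicit threshold $\eta^\ast$ and the no-saturation condition on $c$.
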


\begin{proof}
It is enough to construct one admissible ranking instance.  Consider a query
with one relevant item \(r\) and one irrelevant item \(u\).  Before repair the
scores are \(s(r)=1\) and \(s(u)=0.9\), so AP is \(1\).  Choose an admissible
masked state with larger repair mass on the irrelevant neighborhood than on the
relevant one, for example increments \(0.2\) to \(r\) and \(0.5\) to \(u\) under
a larger \(\eta\).  The relevant recovered mass increases from \(1\) to \(1.2\),
but the ranking becomes \(u\) before \(r\), so AP drops to \(1/2\).  The example
uses only admissible edges; the loss is caused by diffusion into a stronger
irrelevant neighborhood.  AP degradation is therefore a legitimate safety signal
rather than a logical failure of the support mask.
\end{proof}

\section{Repair Certificate}

Every run emits a certificate with the fields needed for a reviewer, analyst,
and reproducibility checker to audit the ranking:
\begin{verbatim}
{
  "run_id": "...",
  "graph_hash": "...",
  "source_routes": [...],
  "route_probe_status": {...},
  "source_timestamps": {...},
  "decision_time": "...",
  "admissibility_policy": "...",
  "withheld_edges": [...],
  "repaired_edges": [...],
  "repaired_channels": [...],
  "questionable_channels": [...],
  "forbidden_post_decision_edges": [...],
  "backend_used": "...",
  "backend_availability": {...},
  "backend_versions": {...},
  "numerical_deviation_from_numpy": {...},
  "repair_parameters": {...},
  "random_seed": 0,
  "corruption_regime": "...",
  "metric_table_hash": "...",
  "verifier_result": "pass",
  "leakage_warnings": [...],
  "harmful_repair_warnings": [...],
  "wound_boundary_operator_W": {...},
  "contact_inhibition_mask_C": {...},
  "migration_flux_divergence": {...},
  "temporal_coherence_score": 1.0
}
\end{verbatim}
The certificate does not prove exploitability and does not prove
cryptographic security.  It proves that a run followed a stated temporal
admissibility policy, or it identifies where the policy, route probe, backend,
or repair channel failed.

\section{Experiments}

\paragraph{Data routes.}
The materialization script probes NVD, CISA KEV, FIRST EPSS, NIST SARD, NIST
SP 800-22, NIST CAVP, ASCAD, BBBC019, LIVECell, Cell Tracking Challenge, and
wound-healing time-lapse routes.  The public security
benchmark uses \PublicRecords{} deduplicated public records when the network
routes are available; otherwise the script explicitly marks an offline fixture.
Large raw sources are not included in the source bundle.

\paragraph{Public-source stress test.}
The current run successfully probes \StressRouteSuccess{} public routes and
builds a timestamped security slice with \PublicRecords{} deduplicated records,
\StressTimestamped{} timestamped records, and \StressPositiveLabels{} delayed
positive source-layer labels.  The public slice combines NVD CVE records, CISA
KEV entries, FIRST EPSS top-risk entries, and overlaps between KEV and EPSS.
It is a bounded live slice rather than a full mirror, but it is large enough to
stress the document--term repair graph: \PublicTerms{} vocabulary terms and
\PublicHoldout{} withheld admissible document--term edges.

\begin{table}[H]
\centering
\caption{Measured scope of the current public-source stress run.}
\label{tab:stressscope}
\begin{tabular}{lc}
\toprule
Quantity & Value\\
\midrule
Public routes successfully probed & \StressRouteSuccess{}\\
Deduplicated public security records & \PublicRecords{}\\
Timestamped records & \StressTimestamped{}\\
Delayed positive source-layer labels & \StressPositiveLabels{}\\
Vocabulary terms in repair graph & \PublicTerms{}\\
Withheld admissible document--term edges & \PublicHoldout{}\\
BBBC019 declared image-count control & \ClosureImageCount{}\\
BBBC019 archive-route control & \ClosureZipCount{}\\
\bottomrule
\end{tabular}
\end{table}

\paragraph{Temporal dataset spine.}
The revision artifact adds an explicit CVE-channel-time table rather than
relying only on document--term withholding.  The table contains
\TemporalSpineRows{} timestamped rows over \TemporalSpineCVEs{} CVEs and
channels for NVD publication, NVD enrichment, EPSS snapshots, KEV membership,
first public references, CVEfixes-style patch visibility, CWE, SARD, CAVP, and
ASCAD.  It marks \TemporalSpineForbidden{} post-decision rows as forbidden and
\TemporalSpineMissing{} rows as missing.  Separate CSVs report source-latency
summaries, missingness histograms, and source-overlap edges.

\paragraph{Withheld-edge repair.}
The document--term graph is built from tokenized public security text.  A fixed
fraction of positive document--term edges is withheld.  The baseline scores a
withheld edge by document degree times term degree.  Conductance repair applies
the recurrence above and scores the document--term block of the repaired
adjacency.  On the public security graph, recall@\(k\) changes from
\PublicRecallBaseline{} to \PublicRecallRepair{} and average precision changes
from \PublicAPBaseline{} to \PublicAPRepair{}.  On the synthetic security
fixture, recall@\(k\) changes from \SyntheticRecallBaseline{} to
\SyntheticRecallRepair{}.

\paragraph{Baselines and ablations.}
The full benchmark compares degree-product scoring, BM25, temporal BM25,
temporal pseudo-relevance feedback, PageRank, diffusion over \(G_\tau\), random
admissible repair, non-certified graph imputation, conductance repair without a
certificate, and the full certificate method.  Ablations remove KEV, EPSS,
fix commits, CWE, SARD, validation metadata, ASCAD route metadata, and
structural controls one at a time.  The required metrics are recall@\(k\),
precision@\(k\), average precision, MRR, nDCG@\(k\), leakage rate, harmful
repair rate, channel attribution accuracy, certificate size, verifier runtime,
backend runtime, and numerical deviation from the NumPy reference.

\paragraph{Query-class IR benchmark.}
Document--term edge recovery remains a graph diagnostic, but the revision now
adds \QueryBenchTasks{} query-class tasks: CVE-to-evidence, evidence-to-CVE,
weakness-class retrieval, exploitability-prior retrieval, patch-evidence
retrieval, and analyst-triage retrieval under partial source visibility.  The
mean AP across these compact query fixtures is \QueryBenchMeanAP{}.  Each row
reports recall@\(k\), precision@\(k\), AP, MRR, nDCG@\(k\), calibration gap, and
failure attribution by source layer.

\paragraph{Temporal and adversarial regimes.}
Random document--term withholding is only the first corruption regime.  The
security-centered regimes withhold entire delayed channels: NVD description
present but CVSS missing, EPSS delayed, KEV unavailable until later, fix commits
delayed, references visible but weak, and CWE present but sparse.  A poisoning
regime injects high-overlap terms into irrelevant documents to test whether
repair amplifies the wrong source layer.  A leakage-trap regime deliberately
places future KEV, EPSS, or fix-commit edges in a contaminated graph and
requires the verifier to reject them.

\paragraph{Leakage, poisoning, and sensitivity fixtures.}
The leakage trap sets the decision time to 2026-01-01 and injects twelve
future KEV-like edges after the decision time.  Letting those edges leak into
the ranking raises AP from \LeakageAdmissibleAP{} to
\LeakageContaminatedAP{}, an inflation of \LeakageAPInflation{}; the verifier
rejects all future edges, giving detection rate \LeakageDetectionRate{}.  The
poisoning stress injects six high-overlap poisoned references into a
34-document security fixture.  Conductance repair places \PoisonRepairTopK{}
poisoned documents in the top-\(k\) set, with harmful repair rate
\PoisonHarmRate{} and AP \PoisonRepairAP{}.  The parameter sweep finds best AP
at repair growth \SensitivityBestAPGrowth{} and worst AP at
\SensitivityWorstAPGrowth{}, while source ablation has the largest AP loss
when the \AblationWorstGroup{} channel is removed
(\AblationWorstDeltaAP{} AP delta).  These are small fixtures, but they are
executable failure-mode tests rather than future-work placeholders.

The leakage fixture is also expanded into a family: future KEV leakage inflates
AP by \LeakageFamilyKEVInflation{}, EPSS-latest leakage inflates AP by
\LeakageFamilyEPSSInflation{}, and future patch/CVEfixes leakage inflates AP by
\LeakageFamilyPatchInflation{}.  The control diagnostics table records
fixed-point residuals, growth sensitivity, saturation fraction, and the safe
repair decision; \SafeRepairRejectedOrQuarantined{} measured growth settings are
quarantined or rejected by the certificate decision rule.

\paragraph{Biological structural-control fixture.}
The biological slice instantiates nodes for cell lineages, wound regions, ECM
substrates, microscopy frames, masks, tracking vectors, source layers, and
morphometric terms.  The same recurrence is run over H0--H5 horizons.  At H2,
the certificate lists segmentation and tracking vectors as withheld admissible
edges and lists confluence, contact-inhibition, and ECM evidence as forbidden
post-decision edges.  The biological leakage trap sets decision time
2026-01-01, injects twelve future contact-inhibition edges, and rejects all of
them with detection rate \BioLeakageDetectionRate{}; if leaked, AP is inflated
by \BioLeakageAPInflation{}.  The biological poisoning fixture injects six
synthetic migration patterns, drops AP by \BioPoisonAPDegradation{}, and records
harmful repair rate \BioPoisonHarmRate{}.  The biological growth sweep reports
best AP at repair growth \BioSensitivityBestAPGrowth{} and worst AP at
\BioSensitivityWorstAPGrowth{}.  The cross-domain stress table records congruent
conductance spread as \CrossDomainCongruentSpread{}.  These controls audit
temporal admissibility and adversarial amplification only; they are not
biological-performance claims.

\begin{figure}[H]
  \centering
  \includegraphics[width=0.76\linewidth]{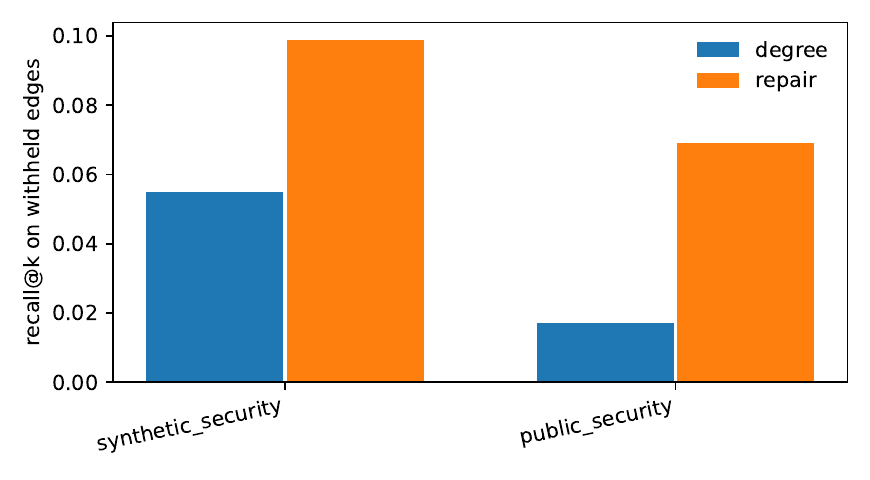}
  \caption{Withheld document--term edge recovery under degree scoring and
  conductance repair.}
  \label{fig:repair}
\end{figure}

\paragraph{Backend benchmark.}
The same flow recurrence and SVD/einsum probe are evaluated under NumPy,
PyTorch, JAX, and TensorFlow.  The locally observed repair backend is
\ObservedBackend{}.  The fastest measured backend in the generated table is
\FastestBackend{}.  TensorFlow availability on the current host is recorded as
\TensorFlowAvailable{}; if unavailable, the table stores the reason instead of
substituting another backend.  Because these timings are host-, version-, and
device-dependent, the backend result is reported as reproducibility metadata in
\texttt{results/tables/tensor\_backend\_benchmark.csv} rather than as a
standalone figure.  The certificate records mean flow time, SVD/einsum probe
time, backend availability, and numerical deviation from the NumPy reference.

\section{Negative-Result Analysis}

The public security result is deliberately not edited into a positive-only
story.  Recall@\(k\) changes from \PublicRecallBaseline{} to
\PublicRecallRepair{}, and AP changes from \PublicAPBaseline{} to
\PublicAPRepair{}.  This says that broad admissible repair can be unsafe for
random edge corruption.  The likely mechanisms are concrete: conductance mass
spreads too broadly, the compact graph is sparse, degree scoring exploits a
fixture artifact, random document--term removal does not model real temporal
missingness, and AP punishes broad diffusion more than recall@\(k\).  The
scientific claim is therefore diagnostic: conductance repair exposes whether a
missing channel is safe to widen, harmful to widen, or contaminated by
post-decision evidence.  A prospective security method that reports harmful
repair is preferable to a retrospective ranker that appears strong only because
it has silently used future labels.

The new stress fixtures make this claim falsifiable.  The leakage trap shows
that future KEV-like evidence can inflate AP by \LeakageAPInflation{} if the
decision-time mask is wrong, while the verifier catches the constructed future
edges.  The poisoning fixture shows the complementary problem: evidence can be
temporally admissible and still harmful, with \PoisonRepairTopK{} poisoned
documents entering the top-\(k\) set.  The growth sweep shows that repair
intensity is not monotone in AP.  Thus the paper's negative results are not
cosmetic; they identify three distinct operational failure modes: leakage,
poison amplification, and saturation-sensitive repair.

\section{Operational Risk and Certified Failure Modes}

The security risk analysis is part of the measurement claim rather than an
appendix of implementation caveats.  A prospective vulnerability-retrieval
system can appear accurate for at least three invalid reasons: it may observe
future evidence, amplify poisoned but admissible evidence, or report a numerical
artifact of a particular backend as if it were an operational property.  The
certificate is designed to make those failures observable and to separate them
from ordinary low-recall or low-precision retrieval errors.

\paragraph{Temporal leakage as causal contamination.}
If the support mask \(M\) is built from post-decision evidence, conductance
repair becomes a mechanism for label leakage rather than a defensive retrieval
operation.  This is the security analogue of temporal experimental bias in
malware evaluation, where ignoring time can inflate deployable performance
\citep{pendlebury2019tesseract}.  In this paper, KEV membership, current EPSS
scores, and fix-commit metadata are therefore treated as separate delayed
channels.  A run is rejected when any of those channels enters a horizon before
its recorded public timestamp.

\paragraph{Admissible poisoning and open-world evidence.}
Temporal admissibility is necessary but not sufficient.  An attacker or noisy
source can seed high-overlap terms into irrelevant documents before the decision
time, and the repair recurrence can then widen an admissible but misleading
neighborhood.  This is why the benchmark reports harmful repair rate, poisoned
top-\(k\) membership, and channel attribution instead of reporting recall alone.
The concern is consistent with the broader warning that security-learning
systems evaluated in closed worlds can fail when deployed against operationally
adaptive evidence \citep{sommer2010outside}.

\paragraph{Control-boundary and saturation risk.}
The recurrence is bounded, but boundedness is not the same as safe operation.
The cap \(c\) and repair growth \(\eta\) define a control boundary: below it,
repair may restore missing admissible evidence; near it, many edges can saturate
and flatten the ranking.  The sensitivity proposition above gives a local
one-step bound, while the certificate measures the empirical saturation
fraction and quarantines runs that cross the configured boundary.  This use of
invariant sets and transition checks follows the standard control view that a
safe state space must be enforced, not merely hoped for after optimization
\citep{khalil2002nonlinear,boyd2004convex}.

\paragraph{Backend and numerical reproducibility risk.}
Tensor backends can differ in floating-point order, SVD implementations, device
placement, installation constraints, and runtime behavior.  Treating TensorFlow
as special without comparing PyTorch, JAX, and NumPy would therefore be
unjustified.  The package records backend availability, runtime, and deviation
from the NumPy reference, and the safe-repair rule rejects a run when numerical
deviation crosses the stated tolerance.  Backend choice is thus a
reproducibility variable, not an evidentiary source.

\paragraph{Validation and cryptographic overclaiming.}
Validation metadata must not be confused with exploitability or security proof.
NIST SP 800-22 is useful as a first statistical screen for random and
pseudorandom generators, but it is not a substitute for cryptanalysis
\citep{nist_sp80022}.  Likewise, CAVP/ACVTS metadata can say that a test route
exists for an algorithm, but it cannot certify that a deployed implementation is
safe in the vulnerability graph.  Conductance repair is therefore framed as a
retrieval diagnostic with auditable failure modes, not as a proof that a CVE is
exploitable or that a cryptographic primitive is secure.

\section{Reproducibility}

Run:
\begin{verbatim}
make experiments
make test
make paper
make arxiv
\end{verbatim}

The source package contains `main.tex', `refs.bib', generated `main.bbl',
figures, compact CSV/JSON results, source-route probes, backend benchmark
tables, code, tests, docs, and a run log.  Raw large datasets, credentials, and
cache files are excluded.

\section{Limitations}

The current package is a bounded public-source stress test, not a full mirror
of NVD, KEV, EPSS, CVEfixes, SARD, CAVP, ACVTS, ASCAD, BBBC019, or LIVECell.
It does not claim clinical validity, exploit ground truth beyond public labels,
or backend superiority in environments not measured.  The poisoning and
leakage tests are fixture tests designed to expose failure modes before larger
source mirrors are materialized.  TensorFlow is included as a tensor backend
because the recurrence maps cleanly to SVD/einsum/flow operations; the
benchmark decides whether that matters on a given host.

\bibliographystyle{plainnat}
\bibliography{refs}

\end{document}